%
%
%
%
%
\RequirePackage{fix-cm}
\documentclass[smallextended]{svjour3}       
\smartqed  
\usepackage{graphicx}
\usepackage{subfigure}
\usepackage[font=small]{caption}

\usepackage{color}
\usepackage{soul}

\usepackage{array} 
\usepackage{verbatim}
\usepackage{hyperref}
\usepackage{fancyhdr}

\usepackage[round]{natbib}

\usepackage[export]{adjustbox}

\usepackage[leftcaption]{sidecap}

\usepackage{amsmath}
\usepackage{mathtools}
\usepackage{amsfonts}
\usepackage{amssymb}
\usepackage{algorithm}
\usepackage[noend]{algpseudocode} 
\makeatletter
\def\BState{\State\hskip-\ALG@thistlm}
\makeatother

%
%
%
%
\journalname{J Comb Optim}
\begin{document}

\title{Team Selection For Prediction Tasks
}


\author{MohammadAmin Fazli         \and
        Azin Ghazimatin \and
       Jafar Habibi \and
       Hamid Haghshenas
}


\institute{MohammadAmin Fazli \at
              Department of Computer Engineering, Sharif University of Technology, Tehran, Iran \\
              \email{fazli@ce.sharif.edu}           
           \and
           Azin Ghazimatin \at
               \email{azinghazimatin@ce.sharif.edu}
	\and
	Jafar Habibi \at
	\email{jhabibi@sharif.edu}
	\and 
	Hamid Haghshenas \at
	\email{haghshenas@ce.sharif.edu}
}


\date{}

\maketitle

\begin{abstract}
Given a random variable $O \in \mathbb{R}$ and a set of experts $E$, we describe a method for finding a subset of experts $S \subseteq E$ whose aggregated opinion best predicts the outcome of $O$. Therefore, the problem can be regarded as a team formation for performing a prediction task. We show that in case of aggregating experts' opinions by simple averaging, finding the best team (the team with the lowest total error during past $k$ rounds) can be modeled with an integer quadratic programming and we prove its NP-hardness whereas its relaxation is solvable in polynomial time.  At the end, we do an experimental comparison between different rounding and greedy heuristics on artificial datasets which are generated based on calibration and informativeness of exprets' information and show that our suggested tabu search works effectively.  \\\\
\keywords{Team Selection \and Information Aggregation \and Opinion Pooling \and Quadratic Programming \and NP-Hard}
\end{abstract}


\section{Introduction}

Predicting the outcome of a random variable is an essential part of many decision making processes \citep{sprenger2012conditional}. For instance, companies have to forecast future customer demands  or changes in market regulations to do a better planning for their production \citep{attarzadeh2011software}. In some cases, lack of sufficient information (like statistical data) compels companies to seek advice from experts \citep{hammitt2013combining, chen2005information}. In order to make better informed decisions, it is logical to integrate opinions of several experts because it leads to more accurate predictions \citep{graefe2014combining}. 

In this work, we consider a situation in which a set of experts are available, each with certain level of expertise. The goal is to predict the outcome of a continuous variable O using their opinions. For each prediction task, we gather experts' opinions and aggregate them by simple linear opinion pooling. As proven before, the arithmetic average of experts’ opinions is an efficient and robust aggregation method \citep{chen2005information, chen2006predicting}. We have prediction profile of each of these experts for $k$ previous prediction tasks. The goal is to find a subset of experts with the best performance, i.e. a subset whose aggregated opinion has the least error regarding the actual outcome of $O$.

Accordingly, our method could be applied in a situation where the amount of effort required to complete a specific task in a software project needs to be predicted for effective planning and scheduling. As the relevant statistical data (data on efforts made for completing same tasks in different projects) might not be enough for a newly established company to base their predictions on, it would be justifiable to ask employees about the effort required to do the task. Suppose that efforts for doing similar tasks in previous projects of the company have been predicted by the workers. Here, our method can be applied to find a subset of workers whose aggregated opinion yield a good estimation of the effort which is crucial for successful control of software projects \citep{jorgensen2007forecasting, malhotra2011software}.

To formalize the problem, define $E = \lbrace e_1 ,\cdots e_n \rbrace$ to be  the set of experts. The $e_i$'s prediction and the actual value of $O$ in the $t$-th round  are respectively profiled by  $y_{it}$ and $x_{d}$.  In order to compare prediction ability of different subsets such as $S$ and $S'$, we use the Sum of Squared Errors (SSE) measure over the past $k$ rounds:
\begin{equation}\label{eq:SSE}
\begin{split}
f(S) &= \sum_{t=1}^k \left(\frac{\sum_{e_i \in S} y_{it}}{|S|} - x_d\right)^2
\end{split}
\end{equation} 

In the Team Selection problem, our goal is to find a subset $S$ with minimum $f(S)$. In this paper, we first consider the relaxed version of this problem where we just want to assign weights to experts and choose them fractionally. We show that this problem can be easily converted to a simple quadratic programming and therefore is polynomially solvable. Then, we show that the integer quadratic programming representing the Team Selection problem is NP-Hard (Sec. \ref{theory}). To solve this problem, we propose an augmented algorithm of the Tabu-Search used for solving the clique problem (Sec. \ref{tabusearch}). Then we suggest some other heuristics for tackling the problem and compare their precision experimentally on different artificial datasets with that of Tabu-Search and show that the Tabu-Search can give a solution to the Team Selection problem with a negligible error. In the rest of this section, some of the related works are discussed.   
\subsection{Related works}   
Various approaches for forecasting have been studied extensively. They can all be categorized into statistical and non-statistical methods. Statistical approaches require sufficient historical data to extract value patterns, whereas non-statistical approaches are based on experts’ judgments and their aggregation \citep{chen2005information, chen2006predicting}. Methods for experts’ judgments aggregation include information markets, opinion pooling, Bayesian and behavioral approaches \citep{chen2005information, clemen2007aggregating}. For information markets, scoring and compensation rules have been introduced to induce truthful forecasts and ensure participation of experts \citep{othman2010decision, boutilier2012eliciting, chen2011information, hora2007expert, zhang2012task}. Moreover, decision rules are used to exploit aggregated judgments to make a decision \citep{boutilier2012eliciting, chen2011information}. Opinion pooling and Bayesian approaches are mathematical methods for aggregating judgments to obtain accurate probability assessment for an event \citep{clemen2007aggregating, hora2007expert, genest1986combining, dani2012empirical, jacobs1995methods, french2011aggregating, morris1974decision, michaeli2008illustration}. Bayesian approach has been widely used in aggregating probability distributions with or without taking the dependence between experts into account \citep{morris1974decision, kallen2002expert, mostaghimi1996combining, mostaghimi1997bayesian}.

Expert opinion has been widely used in many fields. For safety assessment of a nuclear sector, one should rely on opinions of experts as statistical data on catastrophic events are often rare. Much the same goes for prediction of the force level and military intentions of other countries \citep{cooke1991experts}. Therefore, one of the primary applications of expert judgment is in risk analysis such as estimation of the seismic or flood risk \citep{clemen2007aggregating, clemen1999combining, reggiani2008bayesian, cooke1991experts}. 

Selecting a subset of experts who provide us with information about the outcome of an event can be regarded as forming a team of advisors. Recently, team formation, as a more general concept has received much attention. For instance, Lappas et al, took into account the cost of communication among individuals and presented two approaches for forming a team with minimum communication cost yet capable of dealing with a defined task, based on two different communication cost functions \citep{ lappas2009finding}. As another example, Chhabra et al, proposed a greedy approximation to find an optimal matching between people and some interrelated tasks by taking into account the social network structure as an indicator of synergies between members \citep{chhabra2013team}. Kargar et al, also, suggested approximation algorithms for finding a team with minimum communication and personnel costs \citep{kargar2013affordable}.
\section{NP-Hardness}\label{theory}
In order to explore computational complexity of the Team Selection problem, consider the following quadratic programming:
\begin{equation} \label{intprog}
\begin{split}
\text{minimize } & g(w)  =  \sum_{t=1}^k \left(\sum_{i=1}^n w_iy_{i t} - x_t\right)^2  \\
\text{subject to } & \sum_{i=1}^n w_i = 1\\
 & \forall i, w_i \in \{0,\frac{1}{m}\}
\end{split}
\end{equation}
Here, $w = (w_1, w_2, \cdots w_n)$ is the variable vector and $m$ is the number of experts to be selected.
By solving this problem for $m=1,2,\cdots n$, one can solve the Team Selection problem.

The relaxed version of the problem \ref{intprog}, where $\forall i, 0 \leq w_i \leq 1$, can be interpreted as weight assignment to each expert to indicate how much we should weigh his opinion. Thus, we refer to this problem as the Weight Assignment problem. In this section, we first show that the Weight Assignment problem is polynomially solvable by a simple quadratic programming, while its original version (the Team Selection problem) is equivalent to an NP-Hard problem. 


Define $z_{i t} = y_{i t} - x_t$ for all $1 \leq i \leq n$ and $1 \leq t \leq k$.
$z_{i t}$ is the error of the $i$-th expert's forecast in the $t$-th round.
So $y_{i t} = z_{i t} + x_t$, we have
\begin{equation}\label{eq_replace_y_with_z}
\begin{split}
g(w) &= \sum_{t=1}^k \left((\sum_{i=1}^n w_i(z_{i t} + x_t)) - x_t\right)^2 \\
&= \sum_{t=1}^k \left((\sum_{i=1}^n w_i z_{i t}) + (\sum_{i=1}^n w_i x_t) - x_t\right)^2 \\
&= \sum_{t=1}^k \left(\sum_{i=1}^n w_i z_{i t}\right)^2.
\end{split}
\end{equation}

The term inside the summation can be expanded as
\begin{equation}\label{eq_expand_inner}
\left(\sum_{i=1}^n w_i z_{i t}\right)^2 = \sum_{i=1}^n\sum_{j=1}^n w_i z_{i t} z_{j t} w_j.
\end{equation}

Replacing (\ref{eq_expand_inner}) in (\ref{eq_replace_y_with_z}) we get
\begin{equation}\label{eq_convert_to_QP}
\begin{split}
g(w) &= \sum_{t=1}^k \sum_{i=1}^n \sum_{j=1}^n w_i z_{i t} z_{j t} w_j \\
&= \sum_{i=1}^n \sum_{j=1}^n \sum_{t=1}^k w_i z_{i t} z_{j t} w_j \\
&= \sum_{i=1}^n \sum_{j=1}^n w_i (\sum_{t=1}^k z_{i t} z_{j t}) w_j \\
&= \frac{1}{2} \sum_{i=1}^n \sum_{j=1}^n w_i (2\sum_{t=1}^k z_{i t} z_{j t}) w_j.
\end{split}
\end{equation}

So the weight assignment problem can be stated as a quadratic programming
\begin{equation} \label{qcqp}
\begin{split}
\text{minimize } & \frac{1}{2} w^T Q w \\
\text{subject to } & \vec{1}^Tw = 1,\\
 & w \geq 0
\end{split}
\end{equation}
where $\vec{1}$ is the all-one vector and $Q$ is defined as
$$q_{i j} = 2\sum_{t=1}^k z_{i t} z_{j t}.$$
Clearly, $Q$ is symmetric and hence the above quadratic programming is valid. We should show that $Q$ is positive-semidefinite i.e.  for every non-zero vector $u$ we have $u^T Q u \geq 0$. Assume that $\sum_{i=1}^n u_i = c$. Define $u' = \frac{1}{c}u$. We have $\sum_{i=1}^n u'_i = 1$, thus with respect to the definition of $Q$ in (\ref{eq_convert_to_QP}) and (\ref{qcqp}), we have $g(u') = \frac{1}{2}u'^T Q u'$. So $u^T Q u = (cu')^T Q (cu') = c^2 u'^T Q u' = 2c^2 g(u')$ which is clearly non-negative (because $g(.)$ is a quadratic error function). 



We know that a quadratic programming with positive-semidefinite matrix can be solved in polynomial time 
 and hence the weight assignment problem is polynomially solvable.

 
The main result of this section is to show the NP-Hardness of the Team Selection problem. 
\begin{theorem}
The Team Selection problem is NP-Hard. 
\end{theorem}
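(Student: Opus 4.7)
The plan is to prove NP-hardness by a polynomial-time reduction from the \emph{cardinality-constrained Subset Sum} problem: given positive integers $a_1,\dots,a_n$, a target $T$, and a cardinality $m\in\{1,\dots,n\}$, one must decide whether some $S\subseteq\{1,\dots,n\}$ satisfies $|S|=m$ and $\sum_{i\in S}a_i = T$. This problem is classically NP-complete, and the constraint pattern $w_i\in\{0,1/m\}$ with $\sum_i w_i=1$ in (\ref{intprog}) already forces the support of $w$ to have size exactly $m$, so the two formulations line up naturally.

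Given such a Subset Sum instance, I would construct a Team Selection instance with a single round $k=1$, data $y_{i,1}=m\,a_i$, outcome $x_1=T$, and the same cardinality $m$. For any feasible $w$ whose support is a set $S$ of size $m$, the objective in (\ref{intprog}) collapses to
\[
g(w) \;=\; \Bigl(\sum_{i=1}^{n} w_i\, y_{i,1} - x_1\Bigr)^2 \;=\; \Bigl(\sum_{i\in S} a_i - T\Bigr)^2,
\]
which is nonnegative and vanishes iff $\sum_{i\in S}a_i=T$. Hence the Subset Sum instance is a YES-instance iff the optimum of (\ref{intprog}) equals $0$, and any polynomial-time algorithm for (\ref{intprog}) would give one for Subset Sum. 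Combined with membership in NP for the decision version (any candidate $S$ can be checked in polynomial time), this establishes NP-hardness.

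The main subtlety will be that Team Selection as defined in Section~1 does not fix $m$; rather, one minimizes $f(S)$ over all subset sizes. I would bridge this by invoking the authors' own observation that Team Selection reduces to solving (\ref{intprog}) for each $m\in\{1,\dots,n\}$, so NP-hardness of (\ref{intprog}) transfers to the umbrella problem via a polynomial-time Turing reduction over the $n$ possible values of $m$. If a strict many-one reduction is preferred, I would instead reduce from \emph{balanced partition}, in which both the required cardinality and the sum target are fixed by the input, so that a single constructed Team Selection instance suffices. The only remaining technical point---keeping the encoding size polynomial---is handled automatically by the integer scaling $y_{i,1}=m\,a_i$ and $x_1=T$, so the rest of the argument is routine.
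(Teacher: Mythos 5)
Your core reduction is correct, and it takes a genuinely different route from the paper. The paper reduces from maximum independent set in $d$-regular graphs: it realizes the matrix $q_{ij}=2\sum_t z_{it}z_{jt}$ as $A'=A+D$ by giving each expert an error vector $Z_i$ indexed by the edges of $G$, with entries $\frac{1}{\sqrt{2}}$ on incident edges, so that minimizing $\frac{1}{2}w^TQw$ over $w_i\in\{0,\frac{1}{m}\}$ finds the $m$-vertex subgraph with fewest internal edges (via $x^TA'x = 2i(S)+dm$), sweeping over $m$ to detect independent sets. Your reduction reaches the same fixed-$m$ program (\ref{intprog}) far more economically: a single round $k=1$, $y_{i,1}=m\,a_i$, $x_1=T$, and optimum $0$ iff a size-$m$ subset sums to $T$; the computation $g(w)=\left(\sum_{i\in S}a_i-T\right)^2$ is right, since $\vec{1}^Tw=1$ with $w_i\in\{0,\frac{1}{m}\}$ forces $|S|=m$. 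The trade-off is worth noting: your proof shows hardness already with one past round, whereas the paper needs $k=|E(G)|$ rounds; conversely, the paper's instance uses only constant-magnitude numbers, so it gives strong NP-hardness, while cardinality-constrained subset sum is only weakly NP-hard --- indeed for $k=1$ the problem admits a pseudo-polynomial dynamic program, so your construction provably cannot be upgraded to strong hardness without increasing $k$.

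Two caveats on the bridging step. First, your sentence transferring hardness to the size-free problem points the wrong way: the observation that Team Selection reduces to solving (\ref{intprog}) for each $m\in\{1,\dots,n\}$ shows that an oracle for the fixed-$m$ programs solves Team Selection, which is the opposite direction from what hardness transfer requires. (To be fair, the paper is equally loose here: it too proves hardness only of the fixed-$m$ family (\ref{teamprog}) and declares it equivalent to Team Selection, so on this point you are on the same footing as the published proof.) Second, your balanced-partition fallback is genuinely broken as sketched: with $m=\frac{n}{2}$, $\sum_i a_i = 2T$ and $y_{i,1}=m\,a_i$, the full team $S=E$ always achieves $f(E)=\left(\frac{m\sum_i a_i}{n}-T\right)^2=\left(\frac{(n/2)(2T)}{n}-T\right)^2=0$, so the unconstrained optimum is $0$ for every instance, YES or NO, and certifies nothing. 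A correct many-one reduction to the size-free version would need a gadget that penalizes subsets of the wrong cardinality (for example, additional rounds whose errors force $|S|=m$ at any optimum), not merely rescaled values; absent that, you should state the theorem, as the paper implicitly does, for the cardinality-constrained formulation.
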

\begin{proof}
First consider the proposed QP  (\ref{qcqp}) for the Weight Assignment problem. Adding constraints $\forall i, w_i \in \{0,\frac{1}{m}\}$ to this QP will lead to the following mathematical programming which is equivalent to the Team Selection problem (when it is solved for $m=1,2,\cdots,n$).
\begin{equation} \label{teamprog}
\begin{split}
\text{minimize } & \frac{1}{2} w^T Q w \\
\text{subject to } & \vec{1}^Tw = 1 \\
& \forall i, w_i \in \{0,\frac{1}{m}\}
\end{split}
\end{equation}
where $$q_{i j} = 2\sum_{t=1}^k z_{i t} z_{j t}.$$ 

Non-zero weight assigned to an expert means he is a member of the resulting solution. We show that this mathematical programming cannot be solved in polynomial time, unless $P = NP$. In order to prove its NP-hardness, we shall reduce the maximum independent set problem in $d$-regular graphs to this problem. Given a graph $G$, assume that $V(G) = \{v_1,v_2,...,v_n\}$ is the set of vertices of $G$, $E(G)$ is the set of its edges and $deg_G(v_i)$ denotes the $v_i$'s degree in $G$. In the maximum independent set problem, the goal is to find an empty subgraph with maximum number of vertices. We will show that every instance of the independent set problem can be transformed to an instance of the following mathematical problem which can then be reduced to the Team Selection problem:   
\begin{equation} \label{indprog}
\begin{split}
\text{minimize }& \frac{1}{2} x^T A' x \\
\text{subject to }& \vec{1}^Tx = m \\
& \forall i, x_i \in \{0,1\}
\end{split}
\end{equation}
where $A' = A + D$, $A$ is the adjacency matrix of $G$ and $D$ is a diagonal matrix with $D_{i,i} = deg_G(v_i)$. 

After solving the mathematical programming (\ref{indprog}), all the vertices with $x_i = 1$ make a subgraph $S$.
Let $i(S)$ for $S \subseteq V(G)$ denotes the number of $G$'s edges which reside in $S$. That is to say,
$$i(S) = |\{e=(x,y) \in E(G)| x, y \in S\}|.$$
First notice that 
$$x^T A' x = \sum_{i=1}^n{\sum_{j=1}^n{x_ix_jA'_{ij}}} = \sum_{i=1}^n{\sum_{j=1}^n{x_ix_jA_{ij}}} + \sum_{i=1}^n{x_i^2D_{ii}}.$$
It is easy to show that  
$$\sum_{i=1}^n{\sum_{j=1}^n{x_ix_jA_{ij}}} = 2i(S),$$
and 
$$\sum_{i=1}^n{x_i^2D_{ii}} = dm$$
Thus 
$$x^TA'x = 2i(S) + dm.$$

Minimizing $x^TA'x$ with constraint $\sum_{i=1}^n{x_i} = m$ leads to a $m$-vertex subgraph with minimum number of edges. To reduce the maximum independent set problem to the mathematical program (\ref{indprog}), it is sufficient to solve (\ref{indprog}) for all $ 1 \leq m \leq n$ and report the maximum m for which the solution is equal to dm.Thus, the mathematical programming (\ref{indprog}) is NP-Hard. 

Finally, we reduce the problem (\ref{indprog}) to the mathematical programming (\ref{teamprog}). 
It is enough to choose $z_i$s in such a way that $Q = A'$. Recall that 
$$q_{i j} = 2\sum_{d=1}^k z_{i t} z_{j t} = 2Z_i.Z_j$$ 
where $Z_i$ is a $k$-element vector composed of $z_{i d}$s. For equality of matrices, we need $q_{i j} = A'_{i j}$. In other words, we should have
\begin{equation}\label{z_constraint}
Z_i.Z_j = \begin{cases}
\frac{deg_G(v_i)}{2}& \text{if } i = j,\\
\frac{1}{2}& \text{if } v_iv_j \in E(G),\\
0& \text{otherwise}
\end{cases}
\end{equation}

To do this, first set $k = |E(G)|$ (thus $Z_i$ has a coordinate for each edge of $G$). We set $Z_{i}$'s $l$'th coordinate to $\frac{1}{\sqrt{2}}$ if $v_i$ is connected to the $l$'th edge and otherwise we set it to $0$.
To check that this assignment satisfies (\ref{z_constraint}), one can see that when $i=j$, exactly $deg_G(v_i)$ coordinates of $Z_i$ are equal to $\frac{1}{\sqrt{2}}$ and others are zero. So we have
$$Z_i.Z_i = deg_G(v_i)\times\frac{1}{\sqrt{2}}\times\frac{1}{\sqrt{2}} = \frac{\deg_G(i)}{2}.$$
When $v_i$ and $v_j$ are endpoints of an edge (say, the $l$-th edge), the $l$-th coordinate of both $Z_i$ and $Z_j$ equals to $\frac{1}{\sqrt{2}}$ and they have no other common non-zero coordinate. So we have
$$Z_i.Z_j = \frac{1}{\sqrt{2}}\times\frac{1}{\sqrt{2}} = \frac{1}{2}.$$
Finally, when $v_i$ and $v_j$ are not connected, $Z_i$ and $Z_j$ have no common non-zero coordinate and clearly
$$Z_i.Z_j = 0.$$
\end{proof}
  
\section{Tabu Search} \label{tabusearch}

In the previous section, we showed that the Team Selection problem is NP-Hard while its relaxed version, the Weight Assignment problem, is solvable in polynomial time. In this section, we propose a tabu search algorithm to solve the Team Selection problem. 

Tabu Search has proved high performance in finding sets with specific characteristics. Different variations of this method have been used for approximating the best solution for similar problems like the Maximum Clique, Maximum Independent Set, Graph Coloring and  Minimum Vertex Cover \citep{wu2013adaptive, wu2012coloring, wu2012effective}. 
We choose the algorithm introduced in \citep{wu2013adaptive} for solving the Maximum Clique problem as a basis and transform it to an algorithm for the Team Selection problem.

Tabu Search starts from an initial solution and iteratively replaces it with one of its neighbors in order to get closer to the optimal solution. 
In each iteration, a local search is done for finding a group whose collective prediction has the least error. If there is no such neighbor, current solution is regarded as a local minimum. To escape from local minimums, Tabu Search allows the least worse neighbor to be selected. Wu \& Hao use Probabilistic Move Selection Rule (PMSR) when no improving solution is found in neighborhood. This strategy helps to move to other neighbors when the quality of the local minimum is much less than that of the optimal solution \citep{wu2013adaptive}. We use a similar strategy in our proposed algorithm. For preventing previous solutions from being revisited, Tabu Search uses a tabu list which records the duration of each element being kept from moving into or out of current solution. 
\begin{algorithm}
\caption{Tabu Search For Team Selection Problem}\label{TB}
\begin{algorithmic}[1]

\Require A Set of experts ($E$), Expert's sequence of past predictions, integer $MaxIter$ (Maximum number of successive tries which fail to find better solution), $m$ (size of the team)
\Ensure A team with minimum SSE if found

\State $\textit{S} \gets MaxWeightsAssignedTo(E,m)$ 
\State $\textit{lowerBoundOfSSE} \gets g(w)$ $\{$ w contains weights assigned to $E$ $\}$ 
\State $i \gets 0$  $\{$ number of iterations $\}$
\State $\textit{bestSet} \gets \textit{S}$ $\{$ Records the best solution found so far $\}$ 
\While {$i < MaxIter$}
\State $S' \gets S\cup \{v\} \setminus \{u\}$ with minimum SSE among all $u,v$ pairs not in tabu list
\If{$f(S') < f(S)$}
\State $S \gets S'$
\Else
\State $\textit{S} \gets S' \text{with probability }1-P$
\State $\text{ or a random neighbor with probability } P$ 
\EndIf
\State \text{Update the tabu list} $\{$ List of all $u,v$ pairs which are tried in iterations$\}$
\If {$f(S) = \textit{lowerBoundOfSSE}$}
\State \Return $S$
\EndIf
\If {$f(S) < f(bestSet)$}
\State $\textit{bestSet} \gets S$
\State $i \gets 0$
\Else
\State $i \gets  i+1$
\EndIf
\EndWhile
\State \Return \textit{bestSet}
\end{algorithmic}
\end{algorithm}    

Algorithm \ref{TB}, shows the pseudo code of our proposed tabu search. The first line shows the initialization of the first set (team), which then goes through improvements in the main loop.  As the initial set can play an important role in Tabu Search performance \citep{wu2013adaptive}, we suggest  the initial set to be equal to the set of $m$ experts who are given the largest weights in an optimum solution for the Weight Assignment problem (this is shown by $MaxWeightsAssignedTo(E,m)$). 
 
In each iteration of the loop, the amount of improvement gained by each possible swap is calculated simply by subtracting SSE of the team resulting from swapping two experts (one in the current set with another out of it) from the SSE of the current team. If the best possible swap results in a better solution (lower SSE), then the current set is updated with the new solution. Otherwise, a random set is selected as the current solution with probability $P$. In another word, $P$ is the probability of escaping from a local minimum. Like various kinds of Tabu Search, we use tabu list to prevent producing repeated sets. Therefore, after substituting a member with another expert out of the current set, tabu list is updated with regard to tabu tenure values calculated for both selected experts. This implies that for some time these experts are not allowed to move in or out of the current set in next iterations.

There are two terminating conditions for this algorithm. For one, the main loop terminates by not finding any better set after $maxIter$ successive iterations. For another, when the current solution is equal to the solution of the Weight Assignment problem, the algorithm stops the search process. That is to say, there is no other set with less SSE.

\section{Comparision}
In this section, inspired from algorithms proposed for similar problems, we suggest different heuristics for the Team Selection problem and compare their efficiency with the tabu search proposed in Section \ref{tabusearch}.
\subsection{Heuristics}

\noindent\textbf{Random Rounding}: 
Random rounding defines a threshold ($T$) and selects experts with weights above the threshold with probability $P$ and the others with probability $1 - P$. This process will continue until $m$ experts are selected.  Our experiments show that higher amounts of $T$ yields better results.  

\noindent\textbf{Max-Weights}: 
This rounding algorithm takes the $m$ experts with largest weights as members of the team. 

\noindent\textbf{Min-Effect}: 
In each round, this algorithm tries to find a member who has the minimum effect on the SSE of $E$. According to the equation (\ref{eq_replace_y_with_z}), the effect of each person on the SSE function can be calculated as the following: 
\begin{equation} \label{expert-effect}
\left(2w_i \sum_{j \neq i} w_j \sum_{d = 1}^k z_{id}z_{jd}\right) - w_i^2\sum _{d = 1}^k z_{id}^2,
\end{equation}
which is equal to sum of the terms including $z_i$.\\
\noindent\textbf{Best Pairs}: 
Despite the fact that experts with high prediction error are not desirable, aggregated opinions of two or more of them may have an acceptable error. This is due to the bracketing concept \citep{graefe2014combining}. Thus, in this algorithm we allow pairs whose aggregated opinion has the minimum absolute error to be selected. The algorithm computes sum of the absolute errors of the aggregated opinions of all pairs over past $k$ rounds, then report $\lfloor \frac{m}{2} \rfloor$ of pairs with smallest calculated values. For odd values of $m$, last person would be the one among remained experts with minimum sum of absolute errors.

\noindent\textbf{Remove Least Weights}: 
This algorithm runs the Weight-Assignment problem's algorithm iteratively and removes one with the least weight in each iteration. The process continues until $m$ experts are remained. 

\noindent\textbf{Minimum Error}: 
One simple strategy of members selection is to find experts with minimum sum of absolute errors during past $k$ rounds. For simplicity, we call this approach "Minimum Error".

%
%
%
%

\subsection{Comparison of Algorithms}

In this section, we evaluate the tabu search and other heuristics for solving the Team Selection problem. We consider four different simulation scenarios  with 15 experts, each with known distribution for their predictions and tested the algorithms for team sizes from 2 to 10. These scenarios are based on two measures for evaluating quality of expert's distribution (calibration and informativeness introduced by Hammitt \& Zhang in \citep{hammitt2013combining}) and are described as follows:
\begin{itemize}
\item
\textbf{Normal1: }In this case, random variable $O$ and experts' beliefs have normal distribution with $\mu = 10$, thus, experts' information are calibrated. Standard deviation of each expert's distribution is randomly selected from $[1,2]$.
\item
\textbf{Normal2: }This case models calibrated but less informative experts. Therefore, like the previous case, all distributions are normal with $\mu = 10$, but this time, standard deviations of experts' predictions are between $1$ and $7$ ($\sigma_i$ is randomly selected from $[1,7]$).
\item
\textbf{Normal3: }In the third case, we simulate a situation in which some of the experts are not calibrated. For doing this, experts' beliefs have normal distribution with random means that are selected uniformly from $[8,12]$. Like Normal1, standard deviations are chosen randomly between 1 and 2. 
\item
\textbf{Exp: }For the final case, we simulate both the reality and the experts' predictions with exponential distributions with $\mu = 10$.

\end{itemize}

\begin{figure}
\begin{center}
\includegraphics[width=12.2cm , height = 7.5cm]{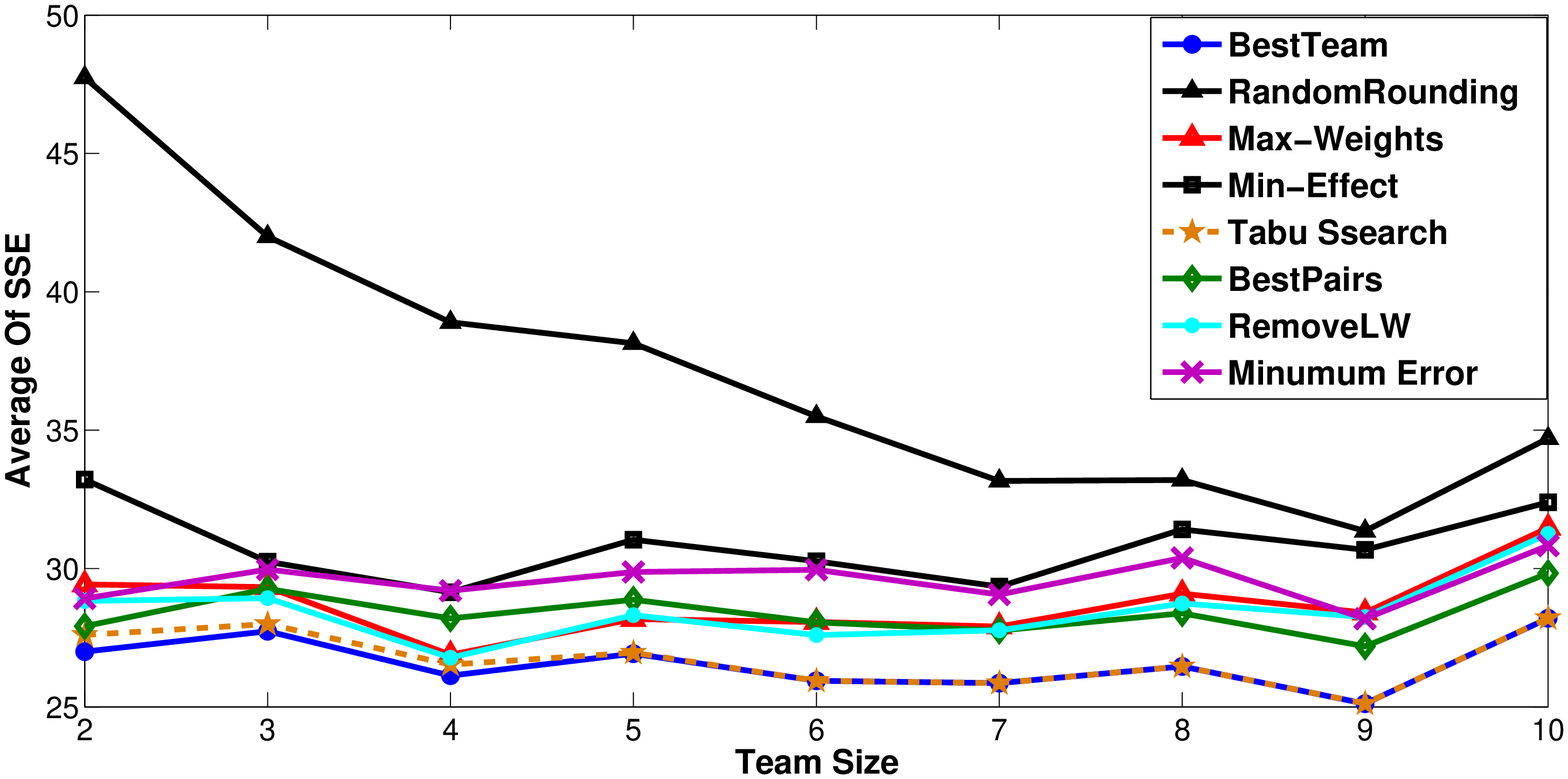}
\end{center}
\captionsetup{labelfont=bf,labelsep=space}
\caption{ Comparison of the algorithms and heuristics for the case Normal1}
\end{figure}

\begin{figure}
\begin{center}
\includegraphics[width=12.2cm , height = 7.5cm]{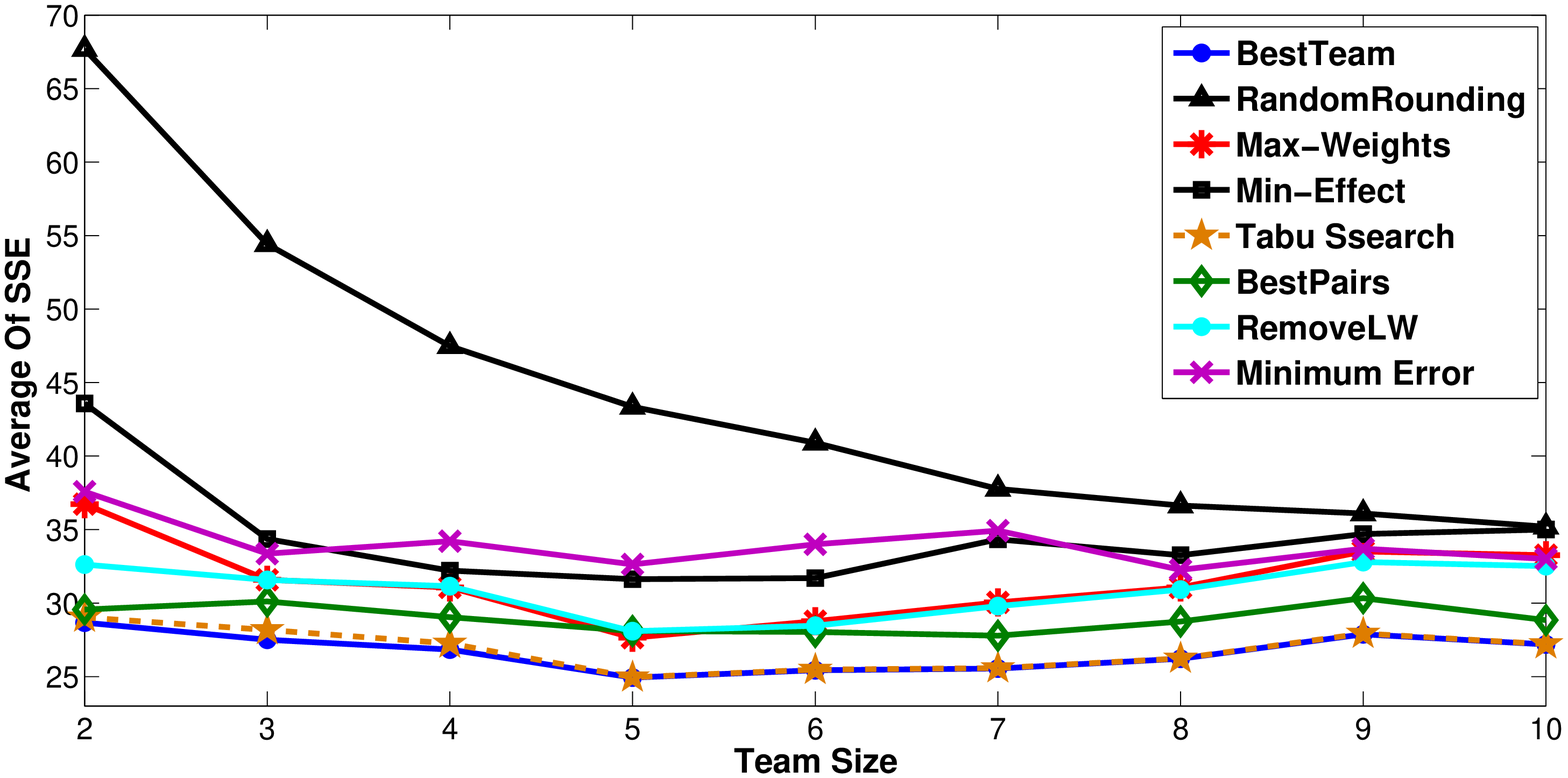}
\end{center}
\captionsetup{labelfont=bf,labelsep=space}
\caption{ Comparison of the algorithms and heuristics for the case Normal2 }
\label{fig:rc}
\end{figure}

\begin{figure}
\begin{center}
\includegraphics[width=12.2cm  , height = 7.5cm]{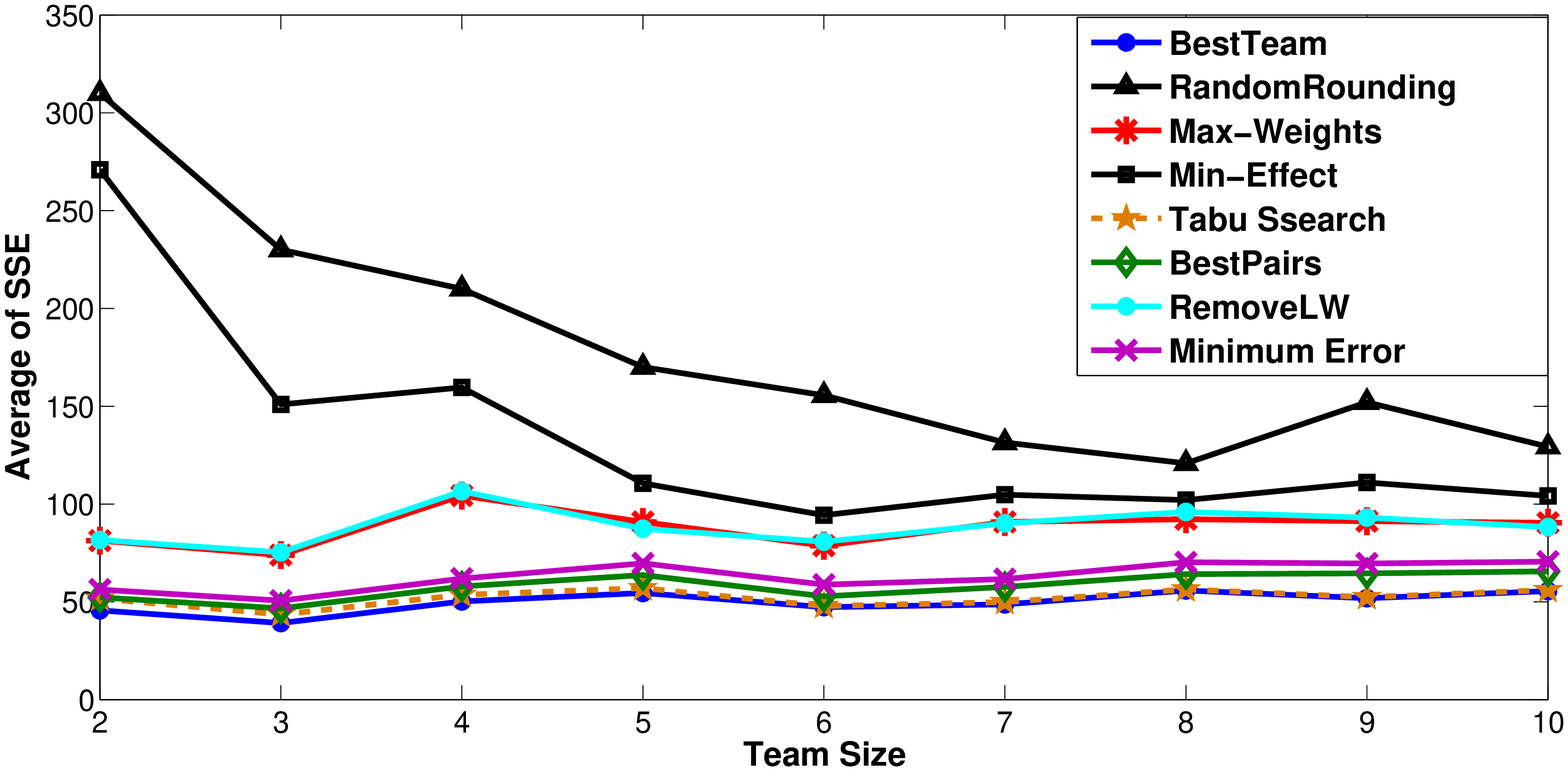}
\end{center}
\captionsetup{labelfont=bf,labelsep=space}
\centering
\caption{ Comparison of the algorithms and heuristics for the case Normal3 }
\end{figure}

\begin{figure}
\begin{center}
\includegraphics[width=12.2cm  , height = 7.5cm]{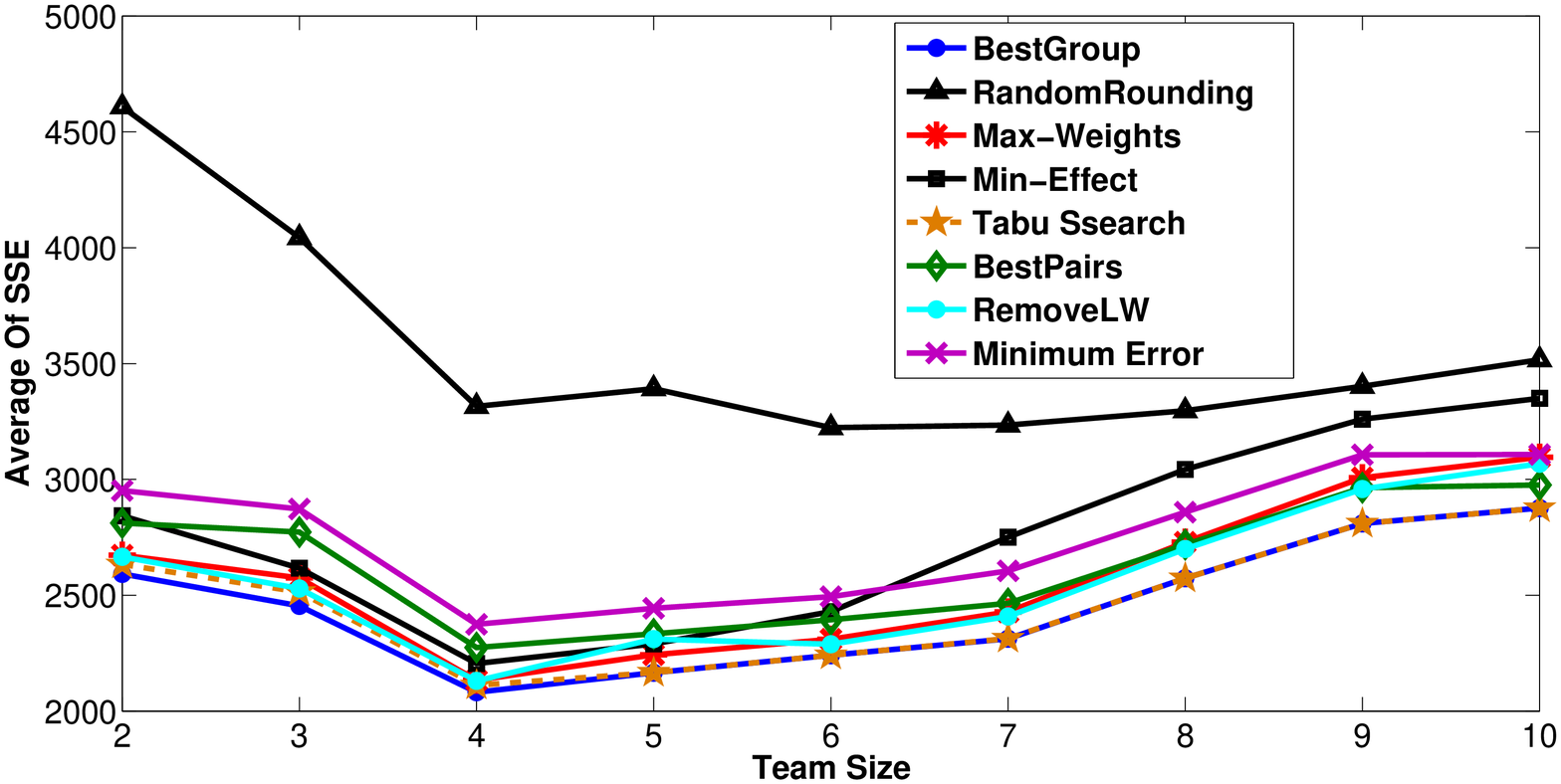}
\end{center}
\captionsetup{labelfont=bf,labelsep=space}
\centering
\caption{ Comparison of the algorithms and heuristics for the case Exp }
\label{fig:gc}
\end{figure}

As presented in Figures 3 to 6 and Table 1, Tabu Search produces far better results in all cases.  It can also be seen that the result of this algorithm is very near to the best possible algorithm which tries all the feasible solutions and return the best one. This means that our suggested algorithm is less sensitive to the distribution of the event $O$. Thus, Tabu Search is more reliable than other proposed heuristics. 
 Best Pairs's efficiency for normal distributions is comparable with Tabu search. Moreover, the average execution time of Best Pairs is around 0.02 of Tabu Search (Table \ref{comT}). Therefore, it would be an acceptable method for quickly forming a team. However, in the case of exponential distributions, Best Pairs performance for small teams is even worse than Min-Effect which is due to the increase in diversity of the forecasts. Hence, the probability of neutralization of an expert's error by another, decreases. 
\begin{center}
\begin{table}[!htb]
\captionsetup{labelfont=bf,labelsep=space}
\caption{Average of difference between the SSE of the best team and the SSE of the heuristics' solutions for teams of size 2 to 10}\label{comF} 
    \begin{tabular*}{1\linewidth}{@{\extracolsep{\fill}}|c|c|c|c|c|}
    \hline
     \textbf{Algorithm}&\textbf{Normal1}&\textbf{Normal2}&\textbf{Normal3}&\textbf{Exponential}\\
     \hline
	RandomRounding&10.348&128.937&17.702&1102.74\\
     \hline
	Max-Weights&2.153&38.329&4.831&121.021\\
     \hline
	Min-Effect&4.26&84.39&7.84&297.97\\
     \hline
	Tabu Search&0.145&2.186&0.18&14.833\\
     \hline	
	BestPairs&1.788&8.507&2.26&178.273\\
     \hline	
	RemoveLW&1.897&38.895&4.188&105.951\\
     \hline
\end{tabular*}
\end{table}
\end{center}
\begin{center}
\begin{table}[!htb]
\begin{center}
\captionsetup{labelfont=bf,labelsep=space}
\caption{Comparison of the heuristics' average of the execution time for finding best team of 8 experts amog 15 experts in case of Normal2}\label{comT} 
\begin{tabular*}{0.55\linewidth}{@{\extracolsep{\fill}}|c|c|}
 \hline   
 \textbf{Name of Algorithm}&\textbf{Average Exe.Time}\\
    \hline
     BestTeam&27.3742\\
     \hline
     RemoveLW&0.0461\\
     \hline
     Tabu Ssearch&0.0345\\
     \hline
      Min-Effect&0.0036\\
     \hline
     Max-Weights&0.0035\\
     \hline
     RandomRounding&0.0013\\
     \hline
     BestPairs&0.0009\\
     \hline
\end{tabular*}
\end{center}
\end{table}
\end{center}
\subsection{Other Experiments}
\noindent\textbf{The effect of the team size}:  As the number of hired experts determines the cost incurred, we would like to know the effect of the team size on the accuracy of the aggregated opinion of its members. Therefore, in our simulations we capture the accuracy for different team sizes and depict the results in Figure \ref{fig:size}. This figure shows the optimal solution for different sizes of $E$. The results show that increasing the number of experts first reduces but then increases SSE again. Therefor, we can conclude that large values for $m$ is neither cost effective nor efficient. 
\begin{figure}[!htb]
\centering
\includegraphics[width=12.2cm  , height =6cm]{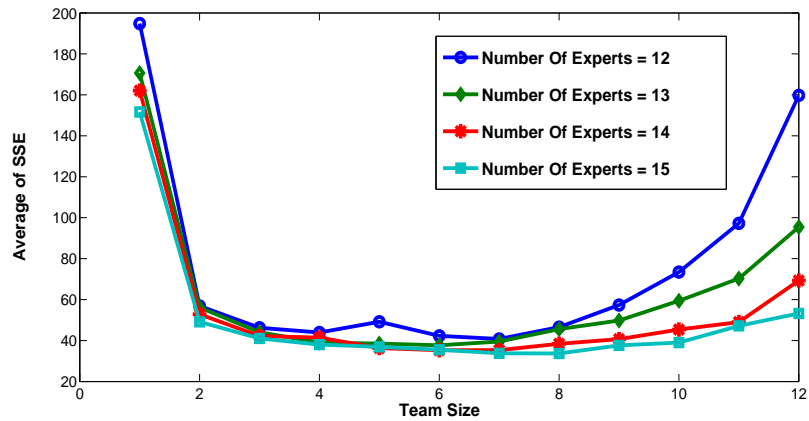}
\captionsetup{labelfont=bf,labelsep=space}
\caption{Eeffect of the team size on the average of the SSE of the best teams when 12, 13, 14 or 15 experts are available}
\label{fig:size}
\end{figure}

\begin{figure}[!htb]
\includegraphics[width=12.2cm  , height = 6cm]{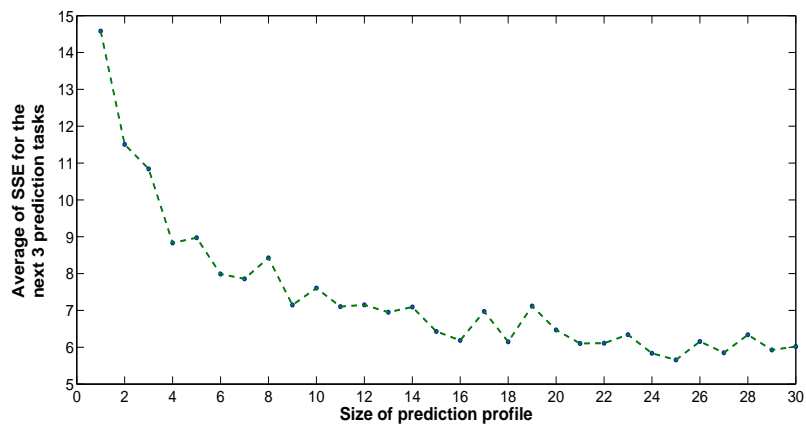}
\captionsetup{labelfont=bf,labelsep=space}
\caption{Effect of the prediction profile size on the SSE of the best team for the next 3 prediction tasks}
\label{fig:past}
\end{figure}
\noindent\textbf{The effect of the prediction profile size}: It is apparent that having more information about the experts' past predictions, improve the quality of the final result. The question here is how much would be enough. We observe that for large values of the size of the experts' prediction profile, the decrease in SSE will finally stop. Therefore, the first point with minimum value would be the optimal number of past records. The results of this experiment can be seen in Figure \ref{fig:past}.  
\section{Conclusion}
In this paper, we addressed the Team Selection problem in which we wanted to form a team of experts with minimum error for performing a prediction task. To simplify the problem, we first studied the relaxed version of the problem (the Weight Assignment problem) in which our goal was to find the best weights for linear opinion pooling. We proved that this problem can be solved with a simple quadratic programming in polynomial time. Then we proved that the Team Selection problem is NP-hard. In the rest of the paper, we proposed a tabu search algorithm for solving the problem. Our experiments show the superior accuracy of this algorithm compared to other proposed algorithms. It is also shown that the accuracy of this algorithm is comparable to the best possible algorithm. 

\bibliographystyle{plainnat}      
\bibliography{bibliography}   

\end{document}